\newtheorem{theorem}{Theorem}
\newtheorem{lemma}[theorem]{Lemma}
\newtheorem{proposition}[theorem]{Proposition}
\theoremstyle{definition}
\newtheorem{example}[theorem]{Example}
\newtheorem{remark}[theorem]{Remark}
\numberwithin{equation}{section}
\begin{document}

\title{Multiparty secret sharing based on hidden multipliers}

\author{Vitaly Roman'kov}

\maketitle
\footnote{The study was carried out within the framework of the state contract
of the Sobolev Institute of Mathematics (project no. 0314-2019-0004).}
\begin{center}
Sobolev Institute of Mathematics of RAS (Omsk Branch), Omsk, Russia\\ romankov48@mail.ru
\end{center}

{\bf Abstract:} 

Secret sharing schemes based on the idea of hidden multipliers in encryption  are proposed. As a platform, one can use both multiplicative groups of finite fields and groups of invertible elements of commutative rings, in particular, multiplicative groups of residue rings. We propose two versions of the secret sharing scheme and  a version of ($k,n$)-thrested scheme. For a given $n$, the dealer can choose any $k.$ The main feature of the proposed schemes is that the shares of secrets are distributed once and can be used multiple times. This property distinguishes the proposed schemes from the secret sharing schemes known in the literature. The proposed schemes are semantically secure. The same message can be transmitted in different forms. From the transferred secret $c$ it is impossible to determine which of the two given secrets $m_1$ or $m_2$ was transferred.
For concreteness, we give some numerical examples. 

{\bf Keywords:} cryptography, multiparty, secret sharing, ($k,n$)-thrested scheme, hidden multipliers.

\section{Introduction}
\label{sec:1}
 Sharing secrets was introduced in 1979 by Shamir \cite{Shamir} and Blakely \cite{Bl}.
Since then, many applications have emerged for several different types of cryptographic protocols.
At the same time, research on some of the major  open problems in secret sharing has led to the development of a rich mathematical theory  related  to combinatorics, information theory,  algebra, and other fields. A secret sharing scheme involves a dealer who holds a secret. This dealer distributes pieces of its secret (called shares) to a set of participants.  Only
qualified participant groups that form the schema's access structure can recover
a complete secret using the shared resources available to them, and therefore only they can read the transmitted message.
  
This paper discusses the unconditionally secure committed
secret sharing schemes. These are the schemes in which the  unqualified sets of participants cannot  obtain any partial information about the transmitted message. The reader is referred to  \cite{Sti} for an introduction to
secret sharing. See also survey \cite{Bei}  and papers  \cite{CSB} -- \cite{HC}  for some secret sharing schemes.

The main goal of this article is to build schemes that provide for the reuse of once distributed secret resources. This is possible only in cases where the allocated secret shares are not revealed when the complete secret is built on their basis. It should also be possible to add, remove or replace qualified group members without changing their shares of the secret. These properties show the advantages of the proposed scheme in comparison with the known secret distribution schemes. The proposed schemas are semantically secure. The same message can be transmitted in different forms. From the transferred secret $c$ it is impossible to determine which of the two given secrets $m_1$ or $m_2$ was transferred.

For example, in Shamir's scheme the secret is a polynomial function $f(x).$ If $n$ is the degree of $f(x)$, then for its  restoration it is required to know $f(x_i)$'s at $n$ different points $x_1, \ldots , x_n.$ Therefore, any subset of at least $n$ members (provided that their at least $n$ points are different) will be qualified. Any subset of $m \leq n-1$ members is unqualified. However, once the secret is recovered, it becomes known to all participants and cannot be reused. 

We propose  schemes such that the initial distribution of shares of the secret between all participants in the process is carried out either using a secure communication channel, or using the protocol of secret key transfer over an open communication channel. The entire further process is carried out over an public network.

If $m \leq n$ then
an ($m, n$)-threshold scheme is one with $n$ total participants and in which any $m$
participants can combine their shares and recover the secret but not fewer than $m$.
The number $m$ is called the threshold. It is a secure secret sharing scheme if
given less than the threshold there is no chance to recover the secret. Using our approach, we also propose an ($m, n$)-scheme for arbitrary parameters $m, n; m\leq n.$ For a given $n$, the dealer can choose any $m$ without  redistributing  shares. This scheme is monotone, i.e., any set of $k\geq m$ participants is qualified.

\bigskip 
 Notation:  $\mathbb{Z}$ -- set of integer numbers, 
 $\mathbb{Z}_n = \mathbb{Z}/n\mathbb{Z}$ -- residue ring,  $\mathbb{N}$ -- set of nonnegative integer numbers, $\mathbb{N}_k = \{1, \ldots , k\}.$ 

\section{Construction of fields with prescribed orders of subgroups of multiplicative groups}
\label{sec:2}
The main idea behind the corresponding algorithm  is the following statement. 

\begin{proposition} (\cite{Hand}, Fact 4.59).
\label{pro:1}
  Let $n\geq 3$ be an odd integer, and suppose that 
$n=1+rq$, where $q$ is an odd prime and $r$ is even positive integer. 
Suppose further that $r < q$.
\begin{itemize}
\item
If there exists an integer $a$ satisfying $a^{n-1}\equiv 1 (\bmod\,n)$  and gcd($a^{r}-1, n$) = $1$, 
then $n$ is prime.
\item If $n$ is prime, the probability that a randomly selected base $a, 1 \leq a \leq n-1$, 
satisfies $a^{n-1}\equiv 1 (\bmod\,n)$  and gcd($a^{r}-1, n$) = $1$ is $\frac{q-1}{q}$.
\end{itemize}
\end{proposition}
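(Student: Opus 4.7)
The plan is to address the two parts separately, both by analyzing the order of $a$ in appropriate cyclic groups. The structural fact driving everything is that the conditions on $a$ force the multiplicative order of $a$ to be divisible by $q$, which then propagates to divisors of $n$.

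For the first part, I would fix any prime divisor $p$ of $n$ and let $d$ denote the order of $a$ modulo $p$. The hypothesis $a^{n-1}\equiv 1\pmod{n}$ reduces modulo $p$ to give $d\mid n-1 = rq$. The coprimality condition $\gcd(a^r-1,n)=1$ in particular forces $a^r\not\equiv 1\pmod{p}$, so $d\nmid r$. Since $q$ is prime and $d$ divides $rq$ but not $r$, I conclude $q\mid d$. Combining with Fermat's little theorem $d\mid p-1$ gives $q\mid p-1$, so every prime divisor of $n$ satisfies $p\geq q+1$. If $n$ were composite it would admit a factorization with at least two prime factors (counted with multiplicity), yielding $n\geq (q+1)^2 = q^2+2q+1$; but the hypothesis $r<q$ gives $n = 1+rq < 1+q^2 < (q+1)^2$, a contradiction. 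So $n$ must be prime. The single delicate point here is recognizing that the hypothesis $r<q$ is exactly what is needed to rule out two prime factors of size $\geq q+1$.

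For the second part, assume $n$ is prime. Then $(\mathbb{Z}/n\mathbb{Z})^*$ is cyclic of order $n-1=rq$, and by Fermat's little theorem every $a\in\{1,\ldots,n-1\}$ automatically satisfies $a^{n-1}\equiv 1\pmod{n}$. The primality of $n$ also reduces the coprimality condition $\gcd(a^r-1,n)=1$ to the inequality $a^r\not\equiv 1\pmod{n}$. In a cyclic group of order $rq$, the equation $x^r=1$ has exactly $\gcd(r,rq)=r$ solutions, so the number of admissible $a$ is $rq-r = r(q-1)$, giving probability $r(q-1)/rq = (q-1)/q$. The main conceptual ingredient is simply that $(\mathbb{Z}/n\mathbb{Z})^*$ is cyclic, after which the counting is immediate; no subtleties arise because the two conditions decouple cleanly, the first becoming trivial and the second reducing to counting $r$-th roots of unity.

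Overall, I do not expect a serious obstacle in either direction. The only nontrivial moment is the numerical squeeze $1+rq < (1+q)^2$ in the first part, which is not really an obstacle so much as the precise place where the hypothesis $r<q$ is consumed.
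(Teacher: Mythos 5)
Your proposal is correct, but it is worth noting that the paper itself does not prove this statement: it is quoted as Fact 4.59 of the Handbook of Applied Cryptography \cite{Hand}, and the only argument the paper supplies is an informal estimate for the second bullet (counting that $x^r=1$ has ``at most $r$'' roots in $\mathbb{F}_n$ and concluding a failure probability of roughly $1/q$ on the range $1<a<n-1$). Your write-up is therefore more complete than the source text on both counts. For the first bullet you give the standard Pocklington-type argument: the order $d$ of $a$ modulo any prime divisor $p$ of $n$ divides $rq$ but not $r$, hence $q\mid d\mid p-1$, so every prime factor of $n$ is at least $q+1$, and the hypothesis $r<q$ rules out a composite $n$ via $1+rq<(q+1)^2$; this is exactly where $r<q$ is consumed, as you say, and the argument is sound (the only implicit step is that $a^{n-1}\equiv 1\pmod p$ already forces $\gcd(a,p)=1$, so the order is defined). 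For the second bullet your count is exact rather than approximate: since $r\mid rq$, the cyclic group $(\mathbb{Z}/n\mathbb{Z})^{\ast}$ of order $rq$ contains exactly $r$ solutions of $x^r=1$, and since the range $1\leq a\leq n-1$ is precisely that group, the probability is exactly $(q-1)/q$, matching the statement, whereas the paper's heuristic (which works on $1<a<n-1$ and only bounds the count) recovers the same order of magnitude but not the stated exact value. In short: correct proof, same underlying mechanism as the cited Handbook fact, and strictly more rigorous than the sketch appearing in the paper.
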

Maurer's algorithm for generating provable primes (Algorithm 4.62 in \cite{Hand})  recursively generates an odd prime $n$, and then chooses random even integers $r, r < q$, until $n = 1 +rq$ 
 can be proven prime using  for some base $a$. By
proposition \ref{pro:1} the proportion of such bases is $1 - 1/q$  for prime $n$. On the other hand, if $n$ is
composite, then most bases $a$ will fail to satisfy the condition $a^{n-1}\equiv 1 (\bmod\,n)$.

Let's describe this algorithm.

\begin{enumerate}
\item We start with an  odd prime $q = q_1$.
\item Let's choose an even $r$ at random:
$$
q \leq r \leq 4q + 2.
$$
\item Consider
$$
n = qr + 1,\, 
q\leq r \leq 4q+2.
$$
\item
 Choose randomly the number $a = a_1$ within $1 <
 a < n-1$ and check the fulfillment of conditions  from
proposition \ref{pro:1}. If $a = a_1$ does not satisfy these conditions, then we take
another random number $a = a_2$. So we repeat a sufficient number of times:
$a = a_1, a_2, \ldots, a_k$ until we find a suitable value
$a$.

If you succeed in doing this, then $n$ is prime. We put $q = q_2 = n$ and
repeat the construction starting from first step. We do this until
don't get a big enough prime.
\end{enumerate}
If, with a large number of trials for $a$, it was not possible to execute the
conditions  of proposition \ref{pro:1}, then we change $r$ and repeat everything again.

Suppose that the constructed number $n$ is indeed
prime. Let us ask ourselves what is the probability of finding the number
$a$ with the given properties  from proposition \ref{pro:1}.  First,
note that the condition $a^{n-1} = 1 (\bmod\, n) $ will be automatically satisfied by Fermat's Little Theorem.
In this case
(for prime $n$) the second condition  gcd($a^{n-1/q} - 1, n) = 1$
$\sim$ gcd$(a^r-1 , n)=1$ is satisfied if and only if
 $a^r \neq 1 (\bmod\, n)$. In the field $ \mathbb{F}_n$, the equation $x^r = 1 $
has at most $r$ roots, one of which is equal to $1$, and the second
$n-1$. Therefore, on the interval $ 1 < a < n-1 $, there are at most $ r-2 $
numbers $ r $ for which $ a^r = 1$ in the field $ \mathbb{F}_n.$ This means
that the probability of choosing such $a$ is no more than $\frac{r-2}{n-3}
\sim \frac{r}{qr} = \frac {1}{q}$.
 
Note also that the thus constructed prime $n$ will be
is greater than $q^2$ because $q \leq r$ and $n = qr + 1$. When
sequential construction of primes $q_1, q_2, \ldots $ they
grow no less than quadratically.

Let's ask another question: how realistic is it to find
prime number $n = qr + 1 $ under the indicated constraints $ q \leq r \leq
4q + 2$, choosing an even $r$.

First of all, note that , by the famous Dirichlet theorem, the progression
$n = 2qt + 1$ ($t = 0, 1, 2, 3, \ldots $)  contains infinitely many
prime numbers. We are interested in primes $n$ of the indicated form with
possible small parameters $t = 1, 2, \ldots $. If the
generalized Riemann hypothesis is true, then the smallest prime number in the indicated
sequence does not exceed $c(\varepsilon) q^{2+ \varepsilon}$ for any $\varepsilon > 0$ 
($c(\varepsilon $ is a constant,
depending on $\varepsilon $). Calculation experience shows that
primes in the specified sequence occur quite
often and close to its beginning. Note also that, according to the theory known
numbers to Cramer's hypothesis $p_{n + 1} - p_n = 0 (\ln^2p_n)$ (here $p_n$
denotes the $n$th prime number in order). Roughly the same follows from the generalized Riemann hypothesis.

Suppose we need to construct a prime $p$ such that $p-1=r$ and $r$ is divisible by the product of $n+1$ pairwise coprime numbers $d$ and $t = \prod_ {i = 1}^nt_i,$ i.e., $r = dtr'$.  This can be effectively done by the process just described, by choosing the parameter $r$  dividing to $dt$ and $r'$ dividing to $q$.  Then we obtain the prime number $p = 1 + r$ and build the finite field $\mathbb{F}_p$ of order $p.$  We can assume that $d = t - 1$ which  is coprime with any number $t_i$. The order $p-1$ of the multiplicative group $\mathbb{F}_p^{\ast}$ divides to $r$. 

Therefore $\mathbb{F}_p^{\ast}$ contains $n$ cyclic subroups $T_i$ = gp($u_i$),
 where $|u_i| = t_i, i = 1, \ldots , n,$ and subgroup $F$ = gp($f$) of order $d$. Let $g$ generates $\mathbb{F}_p^{\ast}$. The elements $u_i$ are efficiently computable by the formula $u_j = g^{\frac{r}{t_j}}$.  The element $f$ is computed as $f = g^\frac{r}{d}.$
 
 Of course, there is another way to find the prime number $p$ for which $p-1$ is divisible by the product $dt $, as above. We select the even numbers $r '$ in a certain interval and check the simplicity of the number $ p = 1 + drr' $ using well-known tests (for example, the Miller-Rabin test (see \cite {Hand}). The check goes on until a simple $p$ is obtained. This method is effective and often used in practical cryptography. 
 
The indicated method of constructing the subgroups $T_i, i = 1, \ldots , n$ and $F$, as above, is obviously extended to residue rings, in particular, to rings of the form $\mathbb{Z}_n, n = pq$, where $p$ and $q$ are different primes. In this case, we can construct the primes $p$ and $q$ with the desired sets of divisors for the numbers $p-1$ and $q-1$, and then use them in our construction. See \cite {RomRSA},  \cite{RomRSAPDM} or \cite{RomCryptology} for details. 

\section{Version of encryption with hidden factors}
\label{sec:3}

   The main idea used to construct a new secret sharing protocol is the encryption scheme proposed in the works of the author \cite {RomRSA} and  \cite {RomRSAPDM}. Let $p$ and $q$ are two large different prime numbers and $n = pq$.  Encryption platform - multiplicative group $\mathbb{Z}_n^{\ast}$
of a residual ring $\mathbb{Z}_n$, in which two subgroups $F$ and $H$ of coprime exponents exp($F$) $= k$ and exp($H$) $= l$ are defined. Recall that, the exponent of a group is defined as the least common multiple of the orders of all elements of the group.  The subgroup $F$ serves as the message space, and $H$ is the space of hidden multipliers. Both of these subgroups $F$ and $H$ are publicly available. The numbers $k$ and $l$ are private. Alice, the owner of the system, computes the number $l'$ such that $ll'=1 (\bmod\, k).$ It follows, that $f^{ll'}=f$ for any element $f\in F.$ The numbers $l$ and $l'$ are private.  

Suppose Bob wants to send a message  to Alice, who is the organizer of the system. Alice will receive and decrypt this message. The algorithm works  as follows:
\begin{enumerate}
\item Bob encodes the message as $f\in F$, chooses $h\in H$ at random and sends $c = hf$ to Alice.  
\item Alice computes $$c^{ll'} = (h^l)^{l'}f^{ll'} = f. $$
\end{enumerate}

The secrecy of the proposed scheme is based on the intractability of calculating the order of an element in a finite field. 
See \cite {RomRSA},  \cite{RomRSAPDM} or \cite{RomCryptology}. 

Similar encryption is also possible on the  multiplicative group $\mathbb{F}_{p^r}^{\ast}$ of any finite field $\mathbb{F}_{p^r}$. Here we give a toy example for the scheme. 
\begin{example}
\label{exa:1}
Let $p = 8317,$ then $p-1 = 8316 = 2^2\cdot 3^3\cdot 7\cdot  11 = 108\cdot 77.$ Then $\mathbb{F}_{p}^{\ast}$ = gp($3$).  In the group $\mathbb{F}_p^{\ast}$, Alice takes two elements: $f = 3^{77} = 3113$ and $h = 3^{108} = 4610$, and defines two subgroups: $F =$ gp($f$) and $H=$ gp($h$)  of coprime exponents $108 = 2^2\cdot 3^3$ and $77 = 7 \cdot 11$ respectively. She also computes $77^{-1} = 101 (\bmod\, 108)$, more exactly: $77\cdot 101 = 7777 = 1 + 108\cdot 72.$  Then for any $m \in F$ one has $m^{108} = 1$ and $m^{7777} = m.$ For any $u \in H$ one has $u^{77} = 1.$

Suppose Bob wants to send the message $m = f^i$ ($i\in \mathbb{Z}$) to Alice. He chooses  $u = h^j$ ($j \in \mathbb{Z}$) and sends $c = um.$ Alice computes as follows: $$c^{7777} = (u^{77})^{101}m(m^{108})^{72} = m.$$

\end{example}

\begin{remark}
Of course, one can use as a platform the multiplicative group $K^{\ast}$ of any commutative associative ring $K$ with unity, provided that large subgroups of coprime exponents can be chosen in $K$, and the problem of calculating the order of an element is intractable. 
One of the advantages of this system over the original RSA version is its semantic secrecy. See \cite{RomRSA}, \cite{RomRSAPDM}  or \cite{RomCryptology} for details.
\end{remark} 
 
\section{Multiparty secret sharing protocol description}
\label{sec:4}
Let ${\mathcal A}$ be the system  which is  organized and managed by Alice.  Let 
 $ \{B_1, \ldots, B_n\} $ be the set of users in ${\mathcal A}.$
 
 \bigskip
 {\bf Version 1} 

\medskip
When setting up the system $ {\mathcal A}$, Alice takes a set of pairwise coprime positive integers $t_1, \ldots , t_n$. Let $t = \prod_{i=1}^nt_i$. Alice also defines $d = t -1$ or $d= t +1.$  Then Alice chooses a large prime finite field $\mathbb{F}_p,\, p - 1 = r,$ where $r = dtr',$ for some $r'\in \mathbb{N}$, while simultaneously defining the set of subgroups 
$T_i =$ gp($u_i$) ($i = 1,  \ldots , n$) and $F =$ gp($f$) of the multiplicative group $\mathbb{F}_p^{\ast}$ of the corresponding  orders $t_1, \ldots , t_n$ and $d$, respectively. The corresponding algorithm has been given in Section \ref{sec:2}. The prime $p$ and consequently field $\mathbb{F}_p$ are public. The subgroups $T_1, \ldots , T_n, F$ and their corresponding orders $t_1, \ldots , t_n, d$ are private. Let $H=\prod_{i=1}^nT_i$. The subgroup $F$ serves as the space of possible secrets, and $H$ is the space of hidden multipliers.

Then Alice distributes the numbers (shares) $t_1, \ldots, t_n$ among the users $B_1, \ldots , B_n$ of  $ {\mathcal A}$  corresponding to their indexes.  Each user $B_i$ receives the share $t_i.$ This distribution is carried out either over a secure communication channel, or is transmitted in encrypted form over an open channel. These shares are for future reuse.  

 Let $ m \in F $ be a message that Alice wants to send to some (qualified) set of users of the system $B(m) = B_ {i_1} \cup \ldots \cup B_{i_w}, 1 \leq i_1 < \ldots < i_{w} \leq n.$ Alice acts as follows:
\begin{enumerate}
\item Alice randomly selects nontrivial elements $v_{i_j} \in T_{i_j},\, j = 1, \ldots , w.$ Then she computes  $c = \prod_{j=1}^wv_{i_j}\cdot m^{t/\prod_{j=1}^wt_{i_j}}$.    She sends $c$ to the coalition $B(m).$
\item Members of the coalition $ B(m) $ sequentially raise the obtained element $c$ to the power  $t_j$ for  $j = 1, \ldots, w.$ In the case $d=t-1$, they get element
$$m^t = m^{d+1} = m.$$ 
If $d = t+1$, they get
$$m^t = m^{d-1} = m^{-1}$$
\noindent and compute $m.$ 
\end{enumerate}

Obviously, Alice can send the message $m$ to any possible coalition of users in this way. Any unqualified coalition will not be able to reveal the message $m$ in some natural way. If this coalition does not contain the user $B_{i_j}$ it cannot remove the factor $v_{i_j}$ and obtain also  $m^{t}.$

This scheme is not monotonous. Moreover, interference with the disclosure of a secret by any member outside the qualified coalition results in an incorrect secret. 

Let us illustrate this scheme with an example. We use the notation introduced above. 
\begin{example}
\label{exa:2}
Let  $p = 6091, p - 1 = 6090 = 2\cdot 3 \cdot 5\cdot 29 \cdot 7, t_1=2, t_2 =3, t_3 = 5, d = 29, t = 30.$ Then $\mathbb{F}_p = $ gp($2$), $u-1 = 2^{3045}= 3045, u_2 = 2^{2030}=4247, u_3 = 2^{1218} = 5842, f = 2^{210}=2901.$ The qualified coalition is $B_1\cup B_2 \cup B_3$.

Alice takes $v_1 = u_1 = 3045, v_2 = u_2^2 = 1558, v_3 = u_3^2 = 1091.$  She chooses the message $m = f^3 = 2^{630}=5948$ and sends $c = v_1v_2v_3m = 2^{4081}= 808.$ The coalition recovers the message 
as $c^{t} = m.$ 
\end{example} 

\bigskip
{\bf Version 2.} 

\medskip 
This version can be used to decrease  $d$ (while decreasing $p$). Let $d$ be a sufficiently large positive integer. Alice choose the number $t_1$ which is coprime to $d$ and computes  $t_1'$ such that $t_1t_1' = 1 (\bmod\, d).$ Note that $t_i'$ is also coprime to $d.$ Then Alice takes a random $t_2$ coprime to $d$ and computes $t_2'$ such that $ t_2t_2' = 1 (\bmod\, d)$ with the constraint that $t_2t_2'$ is coprime to $t_1t_1'$. Alice continues this process and gets a set of elements $\tilde{t}_i = t_it_i '$ for $i = 1, \ldots, n,$ all of whose members are pairwise coprime and coprime to $d$.

Now Alice takes as above the prime finite field $\mathbb{F}_p,\, p - 1 = r, r = dtr', t = \prod_{i=1}^nt_i.$  while simultaneously defining the set of subgroups 
$T_i =$ gp($u_i$) ($i = 1,  \ldots , n$) and $F =$ gp($f$) of the multiplicative group $\mathbb{F}_p^{\ast}$ of  orders $t_1, \ldots , t_n$ and $d$, respectively. The corresponding algorithm has been given in Section \ref{sec:2}. The prime $p$ and consequently field $\mathbb{F}_p$ are public. The subgroups $T_1, \ldots , T_n, F$ and their corresponding orders $t_1, \ldots , t_n, d$ are private. Let $H=\prod_{i=1}^nT_i$. The subgroup $F$ serves as the space of possible secrets, and $H$ is the space of hidden multipliers. 

Then Alice distributes the numbers (shares) $\tilde{t}_1, \ldots, \tilde{t}_n$ among the users of  $ {\mathcal A}$  corresponding to their indexes.  Each user $B_i$ receives the share $\tilde{t}_i.$ This distribution is carried out either over a secure communication channel, or is transmitted in encrypted form over an open channel. These shares are for future reuse.  

 Let $ m \in H $ be a message that Alice wants to send to some (qualified) set of users of the system $B(m) = B_ {i_1} \cup \ldots \cup B_{i_w}, 1 \leq i_1 < \ldots < i_{w} \leq n.$ Alice acts as follows:

\begin{enumerate}
\item Alice randomly selects nontrivial elements $v_{i_j} \in T_{i_j},\, j = 1, \ldots , w.$ Then she computes  $c = \prod_{j=1}^wv_{i_j}\cdot m$.    She sends $c$ to the coalition $B(m).$
\item Members of the coalition $ B (m) $ sequentially raise the  element $c$  and elements successively received from it to the power $\tilde{t}_j$ for  $j = 1, \ldots, w.$ 
\end{enumerate}

The first step gives the element 
$$c_{i_1} = v_{i_2}^{\tilde{t}_{i_1}} \cdots v_{i_w}^{\tilde{t}_{i_1}}\cdot m.$$
This means that the first factor has been removed from the record. The rest of the factors before $m$ retained their orders, since these orders are coprime to $\tilde{t}_{i_1}$.
Continuing the process, they sequentially remove all factors except $m$, which remains unchanged for all exponentiations.
As a result, they get the element $m$.
 
Unlike version 1, this version is monotonous. Any coalition containing a qualified coalition also reveals the secret. This is due to the fact that each raising to the power, which is a share of the secret, does not change the multiplier $m$. Exactly one coalition reveals the correct secret. In this case, the coalition is not obliged to know whether it is qualified in this case.

We give a toy example for the second version of the scheme. 
\begin{example}
\label{exa:3}
Let us choose as a platform the multiplicative group $\mathbb{Z}_n^{\ast}$ of the modular ring $\mathbb{Z}_n, n = pq, $ where $p = 8317$ is a prime number from Example 
\ref{exa:1}, and $q = 31.$ Then $n = 257827.$

 First, Alice solves the system of congruences: 
 $$ \begin{cases} f_1 \equiv 3113 (\bmod \, p), \\ f_1 \equiv 1 (\bmod\, q).\end{cases} $$ 
 By the Chinese Remainder Theorem she gets
  $f_1 \equiv 77966 (\bmod \, n).$ The exponent of $f_1\in \mathbb{Z}_n^{\ast}$  is equal to $108$ (the exponent of $f \in \mathbb{F}_p^{\ast}$; see Example \ref{exa:1}). The subgroup $F =$ gp($f_1$)$\in \mathbb{Z}_n^{\ast}$  is the message space.
  
In the similar way Alice solves the system: 
$$\begin{cases} h_1 \equiv 4610 (\bmod\, p), \\ h_1\equiv 1(\bmod\, q). \end{cases}$$ She finds a solution $h_1 = 71146 (\bmod\, n)$. Therefore  $h_1\in \mathbb{Z}_n^{\ast}$ has the exponent   $77$ (the exponent of $h \in \mathbb{F}_p^{\ast}$; see Example \ref{exa:1}). Then $77\cdot 101 = 7777 \equiv 1 (\bmod\, 108).$ The first share of the secret is $t_1=7777.$ For each $m \in F$,  $m^{t_1} = m$ is fulfilled. For each $u_1 \in $gp($h_1$) we get $u_1^{t_1} = 1.$ 

 Element $16 \in \mathbb{F}_{q}^{\ast}$ has exponent  $5$.  Congruences 
 $$\begin{cases} h_2 \equiv 1(\bmod\, p), \\ h_2 \equiv 16(\bmod\, q).\end{cases}$$
 \noindent  have a solution  $h_2 = 99805$. Then $h_2 \in \mathbb{Z}_n^{\ast}$ has the exponent $5$.  Since $325 = 5\cdot 65 \equiv 1 (\bmod\,108)$ the number $t_2=325$ will be the second share of the secret. For each $m \in F$ one has $m^{t_2} = m.$   For each $u_2 \in $gp($h_2$) we get $u_2^{t_2} = 1.$ 
 
Let the qualified coalition consists of two users $B_1$ and $B_2$, that have two shares of the secret $t_1$ and $t_2$ respectively.  

Suppose Alice wants to send the message $m = f^i$ ($i\in \mathbb{Z}$) to the qualified coalition $B_1\cup B_2$. She chooses  $u_1 = h_1^{s_1}$ and $u_2 = h_2^{s_2}\, (1\leq s_1, s_2 \leq 107)$. Then Alice   sends $c = u_1u_2m$ to $B_1\cup B_2$.   The coalition computes  as follows: $$c^{t_1} =u_1^{t_1}\cdot u_2^{t_1}\cdot m^{t_1} = u_2^{t_1}\cdot m;$$
$$ (u_2^{t_1}\cdot m)^{t_2} = (u_2^{t_2})^{t_1}\cdot m^{t_2} = m.$$

\end{example}

\section{ ($k, n$) - thrested scheme }
\label{sec:5}
This proposition bases on the version 2 of the multiparty sharing protocol described in Section \ref{sec:4}. 
Let us prove a preliminary statement.
\begin{lemma}
Let $n\in \mathbb{N}.$  For any $k\in \mathbb{N}_n$, there exists a set $T(k) = \{t_0, t_1, \ldots , t_{l_k}\}$, which can be represented as a union of $n$ subsets $T_j(k)$ ($j = 1, \ldots , n$) such that the union of any $k$ subsets coincides with $T(k)$, and the union of a smaller number is strictly less than $T(k)$.
\end{lemma}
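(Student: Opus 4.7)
The plan is to give an explicit combinatorial construction. For a fixed $k \in \mathbb{N}_n$, I would take $T(k)$ to be the family of all $(k-1)$-element subsets of $\{1,\ldots,n\}$, enumerated (in any order) as $t_0,t_1,\ldots,t_{l_k}$ with $l_k = \binom{n}{k-1}-1$. For each index $j \in \{1,\ldots,n\}$, I would then define
\[
T_j(k) = \{\, t_i \in T(k) : j \notin t_i \,\},
\]
that is, the collection of those $(k-1)$-subsets which do \emph{not} contain $j$. Clearly $\bigcup_{j=1}^n T_j(k) \subseteq T(k)$, and the reverse inclusion also holds because any $(k-1)$-subset omits at least one element of $\{1,\ldots,n\}$ (assuming $k\le n$, which is the case since $k\in\mathbb{N}_n$).

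Next I would verify the two key properties by a pigeonhole argument. For the covering property, take any $k$ indices $j_1 < \cdots < j_k$ and any $t_i \in T(k)$. Since $|t_i|=k-1 < k$, the subset $t_i$ cannot contain all of $j_1,\ldots,j_k$, so some $j_s \notin t_i$, giving $t_i \in T_{j_s}(k)$; hence $T_{j_1}(k)\cup\cdots\cup T_{j_k}(k)=T(k)$. For the strict deficiency with fewer than $k$ indices, pick any $j_1<\cdots<j_{k-1}$ and consider the particular $(k-1)$-subset $s=\{j_1,\ldots,j_{k-1}\}\in T(k)$. Since $s$ contains every $j_r$, we have $s\notin T_{j_r}(k)$ for all $r$, so $s\notin T_{j_1}(k)\cup\cdots\cup T_{j_{k-1}}(k)$, showing that this union is strictly smaller than $T(k)$.

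I would also check the degenerate boundary $k=1$ briefly: then $T(1)=\{\emptyset\}$, each $T_j(1)=\{\emptyset\}$ coincides with $T(1)$, so any single $T_j(1)$ suffices and the empty union is strictly smaller, consistent with the claim.

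There is no real obstacle here beyond choosing the right combinatorial object; the difficulty is purely one of packaging. The construction above is the natural one to plug into the Version~2 protocol of Section~\ref{sec:4}: each element $t_i$ of $T(k)$ will play the role of one of the pairwise coprime exponents $\tilde t_i$ constructed there, and the assignment of the sub-family $T_j(k)$ to participant $B_j$ turns the set-theoretic covering property just proved into the threshold property that any $k$ participants jointly know the full list of hidden-multiplier exponents while any $k-1$ of them miss at least one.
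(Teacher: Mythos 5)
Your proof is correct, and the combinatorial kernel is the same as the paper's: elements are indexed by the $(k-1)$-element subsets of $\{1,\ldots,n\}$ and the element indexed by such a subset is assigned to exactly the participants outside it, so any $k$ participants cover everything by pigeonhole while the $k-1$ participants forming the indexing subset miss it (and unions of still fewer sets are handled by monotonicity, as you implicitly use). The packaging differs, though. The paper argues by induction on $k$ and keeps all elements introduced at earlier stages, so its $T(k)$ is indexed by all subsets of size at most $k-1$; in particular the construction produces a chain $T(1)\subset T(2)\subset\cdots\subset T(n)$ with $|T(n)|=2^n-1$. Your one-shot construction is leaner ($\binom{n}{k-1}$ elements rather than $\sum_{j\leq k-1}\binom{n}{j}$) and your verification is cleaner and more explicit than the paper's rather telegraphic inductive step, which never spells out the two properties. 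What the cumulative version buys is precisely the nesting, which the $(k,n)$-scheme of Section~\ref{sec:5} exploits: the shares $\tilde{t}_j$ are computed once from the full sets $T(n)_j$, and the dealer realizes a threshold-$k$ session by blinding only with elements of the sub-level $T(k)$. Your families, taken separately for each $k$, are not nested across different values of $k$, so that ``distribute once, choose $k$ per message'' feature would not come for free from your statement as written; it is recovered immediately, and your pigeonhole argument goes through verbatim, if you replace ``subsets of size exactly $k-1$'' by ``subsets of size at most $k-1$'', which essentially reproduces the paper's construction.
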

\begin{proof}
Induction by $k$. The statement is true for $k = 1$, when one can define $T_j= t_0$ for any $j$. Assume that the statement is true for $k-1$ and $T(k-1) =  \{t_0, \ldots , t_{l_{k-1}}\}.$

We enumerate all subsets of the set $\mathbb{N}_n$, consisting of the $k-1$ element: 
$V_1, \ldots, V_{\binom{n}{k-1}}$. Then we take elements $t_{l_{k-1}+1}, \ldots , 
t_{l_{k-1}+ \binom{n}{k}}$ and include each $t_{l_{k-1}+i}$ into all $V_j$ except $V_i$.  Therefore we can  set 
$T(k) = \{t_0, t_1, \ldots , t_{l_k}\}$, where $l_k =  1 + \binom{n}{1} + \ldots + \binom{n}{k}$, satisfying the required condition.
 
\end{proof}
Note that for the indicated construction $ T_1 \subset T (2) \subset \ldots T(n) $ and $T (n) = 2^n-1.$

Let ${\mathcal A}$ be the system  which is  organized and managed by Alice.  Let 
 $ \{B_1, \ldots, B_n\} $ be the set of users in ${\mathcal A}.$ 
 
 \medskip
For  $s=2^n-1$, Alice takes a set of pairwise coprime positive integers $T = \{t_1, \ldots , t_s\} \cup \{d\}$.  Let $t = \prod_{i=1}^st_i.$  Then Alice chooses a large prime finite field $\mathbb{F}_p,\, p - 1 = r,$ where $r = dtr',$ for some $r'\in \mathbb{N}$, while simultaneously defining the set of subgroups 
$W_i =$ gp($w_i$) ($i = 1,  \ldots , s$) and $F =$ gp($f$) of the multiplicative group $\mathbb{F}_p^{\ast}$ of the corresponding  orders $t_1, \ldots , t_s$ and $d$, respectively. 

By Lemma 1, Alice defines a representation of the set $T$ in the form of a union of subsets $T(n)_i$
for $i = 1, \ldots , n.$ Each $T(n)_i$ is the union of ascending sets $T(1)_i \subset T(2)_i \ldots \subset T(n)_i.$ For each $j \in \mathbb{N}_n$, Alice computes $\bar{t}_i = \prod_{i\in T(n)_j}t_i$. Then she computes the shares $\tilde{t}_j = \bar{t}_j\bar{t}_j^{-}$, where $\bar{t}_j^{-} = 
\bar{t}_j^{-1} (\bmod\, d).$

Then Alice distributes the shares $\tilde{t}_j$ among the participants according to the indices.  Each participant $B_j$ receives the share $\tilde{t}_j$.  

We suppose that Alice wants to develop an ($k, n$) secret sharing scheme. In each subgroup $W_i$, where $i \leq l_m$, Alice chooses a nontrivial element $g_i$. Then she computes $$c= \prod_{i \in T(m)}g_i\cdot k$$
\noindent 
and sends this element to all participants. 

Let $C$ be a coalition, consisting of $z \geq k$ members. Coalition members consistently raise $c$ to exponents equal to their shares. Since the product of all their shares is divisible by any value $t_i $ for $i \leq l_k $, the result is the shared secret $m$. This does not happen if the coalition has fewer than $k$ members. Hence, it is a ($k, n$) secret sharing scheme. 

\bigskip
{\bf Properties and security}

\medskip
The semantic secrecy of the above schemes is based on the difficult solvability of the problem of calculating the exponent of an element in the platform under consideration (finite field or commutative associative ring with unity, in particular,  residue ring). Indeed, let there be two secrets $m_1$ and $m_2$, one of which is transmitted in the form $c = tm$, where $t$ and $m $ have coprime orders. If the attacker can calculate the exponents of the elements,  he will calculate $e_i = $exp($m_i$) and $e_i' = $ exp($c^{- 1}m_i$) for $i = 1, 2.$ Then he compares the sets of prime divisors for two pairs $ e_1, e_1 '$ and $ e_2, e_2' $. Only in the pair corresponding to the transmitted secret,  such a set for $ e_i '$ does not contain prime divisors of $e_i$.

If the problem of calculating the exponent of a protocol platform element is intractable, then this scheme is semantically secret. 

In the case of the field $\mathbb {F}_p$, to calculate the orders of the elements of the group $ \mathbb{F}_p^{\ast} $, it is sufficient to know the primary decomposition of the number $p-1 $ (see \cite{Hand}). The ability to solve the problem of calculating the order of an element of the group $\mathbb{Z}_n^{\ast}, n = pq$ ($p, q $ are different primes) gives an algorithm for calculating transmitted messages, that is, it solves the RSA problem (see \cite{RomRSA}). 

There is very little public data in the proposed schemes. In the case of the field 
$\mathbb{F}_p $, only its order $p$ is known, but the primary decomposition of the number $p-1$ is unknown. In the case of the residue ring $ \mathbb{Z}_n$, only the module $n$ is known.

 {\bf Conclusions.}

In this paper, we have studied two versions of a new secret sharing scheme based on hidden multipliers in finite fields or commutative associative rings, in particular, residue rings.  We also propose a ($k, n$) - thrested scheme, where $ k $ is chosen by the dealer for each session without any additional allocations. The main feature of the proposed schemes is that the shares of secrets are distributed once and can be used multiple times. Our schemas are secure against passive attacks and semantically secure under assumption that the problem of calculation the exponent of element of a protocol platform is intractible.  It is also easy to see that the version 2 of the sharing scheme is monotonic, but the version 1 is not. The size of each share of the secret is equal to the size of the secret, since they are platform elements.

\end{document}